\newsavebox{\imagebox}
\newtheorem{obs}[theorem]{Observation}
\let\oldendproof\endproof
\renewcommand\endproof{~\hfill$\qed$\oldendproof}
\begin{document}
%%\setlength{\parskip}{3pt plus2mm minus2mm}
%%\addtolength{\intextsep}{-4mm}%
%%\addtolength{\belowcaptionskip}{-2mm}%

\renewcommand{\orcidID}[1]{\href{https://orcid.org/#1}{\includegraphics[scale=.03]{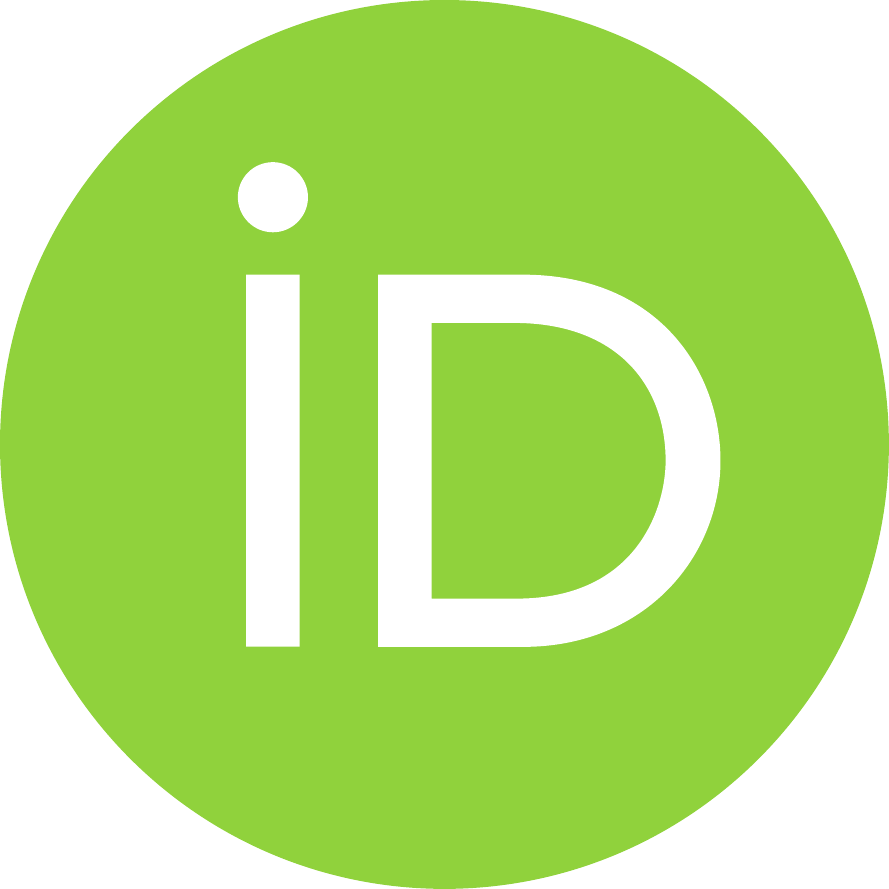}}}

\title{On the Maximum Number of Crossings in Star-Simple Drawings of $K_n$\\ with No Empty Lens%
\thanks{This research started at the 3rd
    Workshop within the collaborative DACH project \emph{Arrangements
        and Drawings}, August 19--23, 2019, in Wergenstein (GR),
    Switzerland, supported by the German Research Foundation (DFG),
    the Austrian Science Fund (FWF), and the Swiss National Science
    Foundation (SNSF). We thank the participants for stimulating
    discussions. S.F.~is supported by DFG Project FE~340/12-1. M.H.~is
    supported by SNSF Project 200021E-171681. K.K.~is supported by DFG
    Project MU~3501/3-1 and within the Research Training Group
    GRK~2434 \emph{Facets of Complexity}. I.P.~was supported
    by FWF project I~3340-N35.}}
\titlerunning{Crossings in Star-Simple Drawings of $K_n$ with No Empty Lens}
% If the paper title is too long for the running head, you can set
% an abbreviated paper title here
%
\author{Stefan Felsner\inst{1}\orcidID{0000-0002-6150-1998} 
\and
Michael Hoffmann\inst{2}\orcidID{0000-0001-5307-7106} 
\and
Kristin Knorr\inst{3}\orcidID{0000-0003-4239-424X}
\and
Irene Parada\inst{4}\orcidID{0000-0003-3147-0083}
}
\authorrunning{S. Felsner et al.}
% First names are abbreviated in the running head.
% If there are more than two authors, 'et al.' is used.
%
\institute{Institute of Mathematics, Technische Universität Berlin, Germany\\
    \email{felsner@math.tu-berlin.de}\and
    Department of Computer Science, ETH Zürich, Switzerland\\
    \email{hoffmann@inf.ethz.ch}\and
    Department of Computer Science, Freie Universität Berlin, Germany\\
    \email{knorrkri@inf.fu-berlin.de} \and
    Department of Mathematics and Computer Science, \\
    TU Eindhoven, The Netherlands \email{i.m.de.parada.munoz@tue.nl}}
\maketitle              % typeset the header of the contribution
\begin{abstract}
    A star-simple drawing of a graph is a drawing in which adjacent
    edges do not cross. 
    %and edges are not self-intersecting. 
    In contrast,
    there is no restriction on the number of crossings between two
    independent edges. When allowing empty lenses (a face in the
    arrangement induced by two edges that is bounded by a $2$-cycle),
    two independent edges may cross arbitrarily many times in a
    star-simple drawing. We consider star-simple drawings of~$K_n$
    with no empty lens. In this setting we prove an upper bound
    of~$3((n-4)!)$ on the maximum number of crossings between any pair
    of edges. It follows that the total number of crossings is finite
    and upper bounded by~$n!$.
    
\keywords{star-simple drawings \and	topological graphs \and	edge crossings.}
\end{abstract}

\section{Introduction}
A \emph{topological drawing} of a graph~$G$ is a drawing in the plane
where vertices are represented by pairwise distinct points, and edges
are represented by Jordan arcs with their vertices as endpoints.
Additionally, edges do not contain any other vertices, every common
point of two edges is either a proper crossing or a common endpoint,
and no three edges cross at a single point.  A \emph{simple drawing}
is a topological drawing in which adjacent edges do not cross, and
independent edges cross at most once.

We study a broader class of topological drawings, which are called \emph{star-simple} drawings, where
adjacent edges do not cross, but independent edges may cross any
number of times; see \figurename~\ref{fig:simple:ex} for
illustration. In such a drawing, for every vertex~$v$ the induced
substar centered at~$v$ is simple, that is, the drawing restricted to
the edges incident to~$v$ forms a plane drawing. In the literature
(e.g.,~\cite{semisimple17,balko2015crossing}) these drawings also
appear under the name \emph{semi-simple}, but we prefer star-simple
because the name is much more descriptive.

\begin{figure}[htb]
    \begin{minipage}[t]{.29\linewidth}
        \centering\includegraphics[page=4]{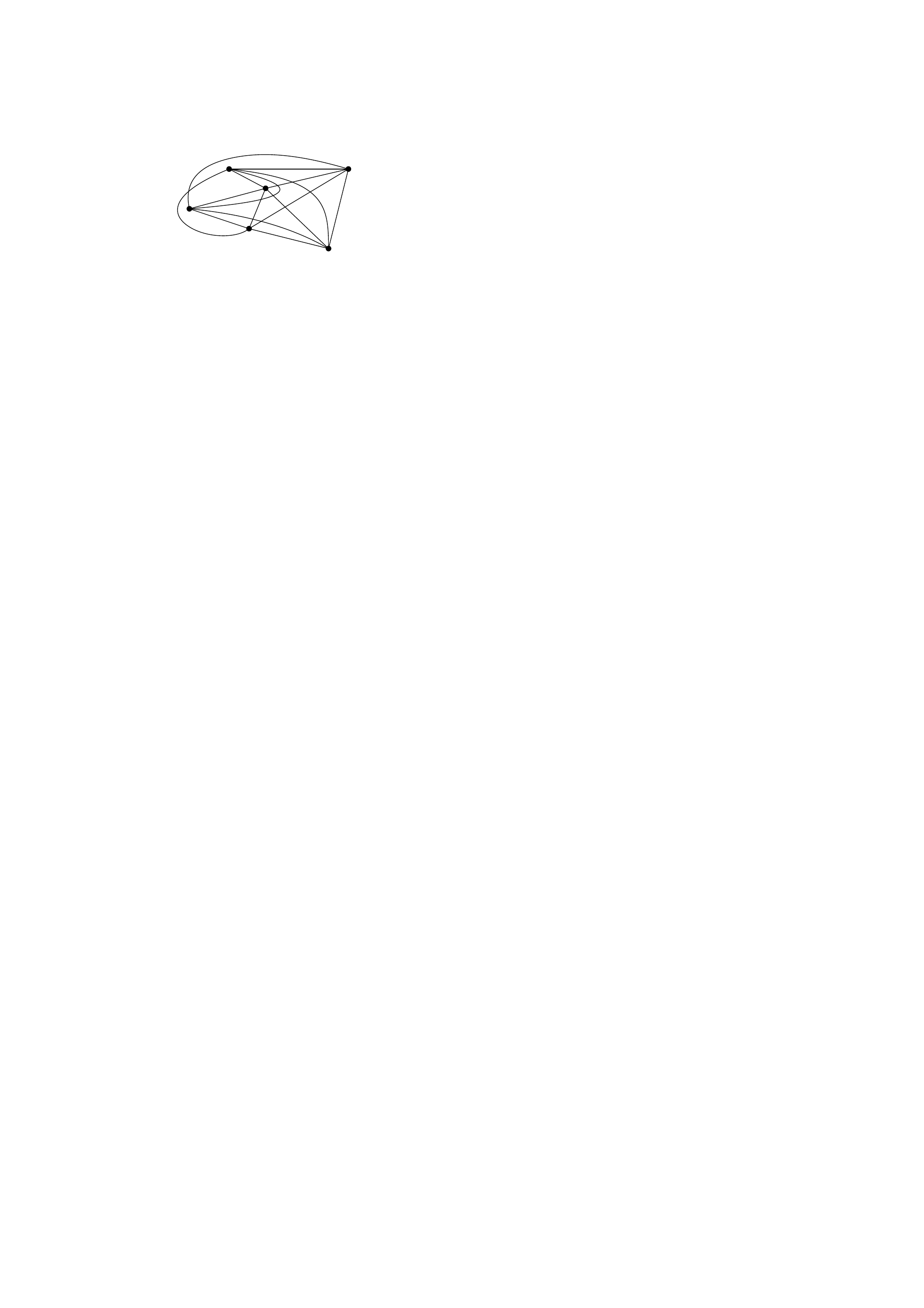}
        \subcaption{simple}\label{fig:simple:ex:1}
    \end{minipage}\hfill
    \begin{minipage}[t]{.35\linewidth}
        \centering\includegraphics[page=6]{simple-ex}
        \subcaption{star-simple but not simple}\label{fig:simple:ex:2}
    \end{minipage}\hfill
    \begin{minipage}[t]{.35\linewidth}
        \centering\includegraphics[page=5]{simple-ex} 
        \subcaption{not star-simple}\label{fig:simple:ex:3}
    \end{minipage}
    \caption{Topological drawings of $K_6$ and a (nonempty) lens (shaded in (b)).\label{fig:simple:ex}}
\end{figure}

In contrast to simple drawings, star-simple drawings can have regions or cells whose boundary consists
of two continuous pieces of (two) edges. We call such a region a
\emph{lens}; see \figurename~\ref{fig:simple:ex:2}. A lens is \emph{empty} if it has no vertex in its
interior.  If empty lenses are
allowed, the number of crossings in star-simple drawings of graphs with
at least two edges is unbounded (twisting), as illustrated in \figurename~\ref{fig:twisting}.
We restrict our attention
to star-simple drawings with no empty lens. This restriction is---in
general---not sufficient to guarantee a bounded number of crossings
(spiraling), as illustrated in \figurename~\ref{fig:spiraling}. However, we will show that star-simple drawings \emph{of
    the complete graph~$K_n$} with no empty lens have a bounded number
of crossings.

\begin{figure}[bht]%
  \captionsetup[subfigure]{justification=centering}
    \hbox{}\hfill\begin{minipage}[b]{.4\linewidth}
        \centering
        \begin{tikzpicture}[scale = 0.8]
	\coordinate (S) at (0,0);

	%twisting
	
	%meander
	\draw[thick] ($(S)+(0.5,0.3)$) -- ($(S)+(0.5,0)$);
	\draw[thick] ($(S)+(0.5,0)$) arc (180:360:0.3);
	\draw[thick] ($(S)+(1.1,0)$) arc (180:0:0.3);
	\draw[thick] ($(S)+(1.7,0)$) arc (180:360:0.3);
	\draw[thick] ($(S)+(2.3,0)$) arc (180:0:0.3);
	\draw[thick] ($(S)+(2.9,0)$) arc (180:360:0.3);
	\draw[thick] ($(S)+(3.5,0)$) arc (180:0:0.3);
  \draw[thick] ($(S)+(4.1,-0.3)$) -- ($(S)+(4.1,0)$);
	
  %straight line
	\draw[color=red,thick] ($(S)+(0,0)$) -- ($(S)+(4.6,0)$);
		
	%vertices
	%black
	\fill ($(S)+(0.5,0.3)$) circle (0.056);
	\fill ($(S)+(4.1,-0.3)$) circle (0.056);
	%red
	\fill (S) circle (0.056);
	\fill ($(S)+(4.6,0)$) circle (0.056);
	
\end{tikzpicture}\vspace{38pt}
        \subcaption{twisting}
        \label{fig:twisting}
    \end{minipage}\hfill
    \begin{minipage}[b]{.4\linewidth}
        \centering
        \begin{tikzpicture}[scale=0.8]
	%spiraling
	
	\coordinate (S1) at (0,0);
	%meander
	\draw[thick] ($(S1)+(0.5,0.5)$) -- ($(S1)+(0.5,0)$);
	\draw[thick] ($(S1)+(0.5,0)$) arc (180:360:2);
	\draw[thick] ($(S1)+(4.5,0)$) arc (0:180:1.75);
	\draw[thick] ($(S1)+(1,0)$) arc (180:360:1.5);
	\draw[thick] ($(S1)+(4,0)$) arc (0:180:1.25);
	\draw[thick] ($(S1)+(1.5,0)$) arc (180:360:1);
	\draw[thick] ($(S1)+(3.5,0)$) arc (0:180:0.75);
	\draw[thick] ($(S1)+(2,0)$) arc (180:360:0.5);
	\draw[thick] ($(S1)+(3,0)$) arc (0:180:0.25);
	\draw[thick] ($(S1)+(2.5,-0.25)$) -- ($(S1)+(2.5,0)$);

	%straight line
	\draw[color=red,thick] ($(S1)+(0,0)$) -- ($(S1)+(5,0)$);
		
	%vertices
	%black
	\fill ($(S1)+(0.5,0.5)$) circle (0.056);
	\fill ($(S1)+(2.5,-0.25)$) circle (0.056);
	\fill ($(S1)+(2.75,0.125)$) circle (0.056);
	%red
	\fill (S1) circle (0.056);
	\fill ($(S1)+(5,0)$) circle (0.056);
	
\end{tikzpicture}
        \subcaption{spiraling}
        \label{fig:spiraling}
    \end{minipage}\hfill\hbox{}
    \caption{Constructions to achieve an unbounded number of crossings.}
    \label{fig:unb_crossings}
\end{figure}
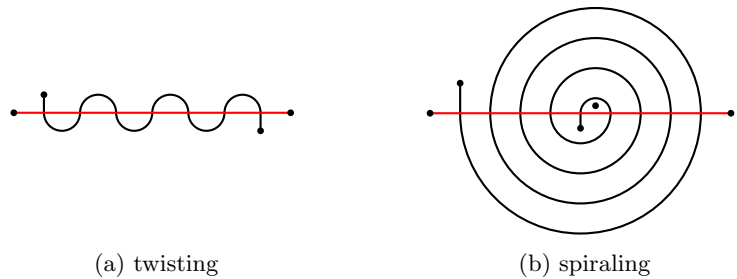

Empty lenses also play a role in the context of the crossing lemma for
multigraphs~\cite{PachToth2020}. This is because a group of
arbitrarily many parallel edges can be drawn without a single
crossing.  Hence, for general multigraphs there is no hope to get a
lower bound on the number of crossings as a function of the number of
edges.  However, if we forbid empty lenses, we cannot draw arbitrarily many parallel edges.

Kyn\v{c}l~\cite[Section 5 "Picture hanging without
crossings"]{Kyncl2016} proposed a construction of two edges in a graph
on~$n$ vertices with an exponential number~($2^{n-4}$) of crossings
and no empty lens; see \figurename~\ref{fig:exp_crossings}.  This
configuration can be completed to a star-simple drawing of~$K_n$, cf.~\cite{parada_thesis}. 
For $n=6$ it is possible to have one more crossing while maintaining
the property that the drawing can be completed to a star-simple
drawing of~$K_6$; see \figurename~\ref{fig:many-crossings}. Repeated
application of the doubling construction of
\figurename~\ref{fig:exp_crossings} leads to two edges with~$2^{n-4}+2^{n-6}$ crossings in a graph on~$n$ vertices.  This
configuration can be completed to a star-simple drawing of~$K_n$. We
suspect that this is the maximum number of crossings of two edges in a
star-simple drawing of~$K_n$.

\begin{figure}[htb]%
    \centering
    \begin{tikzpicture}[scale=0.67]
	\coordinate (S) at (0,0);
	%meander
	 \draw[thick] ($(S)+(1,1)$) -- ($(S)+(1,-1)$);
  \fill ($(S)+(1,1)$) circle (0.07);
  \fill ($(S)+(1,-1)$) circle (0.07);

  %straight line
  \draw[color=red,thick] (S) -- ($(S)+(2,0)$);
  \fill (S) circle (0.07);
  \fill ($(S)+(2,0)$) circle (0.07);
  
  \draw[|-{>[width=3mm]},thick] ($(S)+(2.4,0)$) -- ($(S)+(3.1,0)$);
  \coordinate (S1) at ($(S)+(3.5,0)$);
	%meander
	\draw[thick] ($(S1)+(0.5,0)$) -- ($(S1)+(0.5,-1)$);
	\draw[thick] ($(S1)+(2.5,0)$) -- ($(S1)+(2.5,-1)$);
	\draw[thick] ($(S1)+(0.5,0)$) arc (180:0:1);
  \fill ($(S1)+(0.5,-1)$)circle (0.07);
  \fill ($(S1)+(2.5,-1)$) circle (0.07);

  %straight line
  \draw[color=red,thick] (S1) -- ($(S1)+(3,0)$);
  \fill (S1) circle (0.07);
  \fill ($(S1)+(3,0)$) circle (0.07);
  
  %lens vertex
  \fill ($(S1)+(1.5,0.5)$) circle (0.07);
  
  \draw[|-{>[width=3mm]},thick] ($(S)+(6.9,0)$) -- ($(S)+(7.6,0)$);
  \coordinate (S2) at ($(S)+(8,0)$);
	%meander
	\draw[thick] ($(S2)+(0.5,0)$) -- ($(S2)+(0.5,-0.75)$);
	\draw[thick] ($(S2)+(1.5,0)$) -- ($(S2)+(1.5,-0.75)$);
	\draw[thick] ($(S2)+(0.5,0)$) arc (180:0:1.5);
	\draw[thick] ($(S2)+(1.5,0)$) arc (180:0:0.5);
	\draw[thick] ($(S2)+(0.5,0)$) arc (180:0:1.5);
	\draw[thick] ($(S2)+(1.5,0)$) arc (180:0:0.5);
	\draw[thick] ($(S2)+(2.5,0)$) arc (180:360:0.5);
  \fill ($(S2)+(0.5,-0.75)$)circle (0.07);
  \fill ($(S2)+(1.5,-0.75)$) circle (0.07);

  %straight line
  \draw[color=red,thick] (S2) -- ($(S2)+(4,0)$);
  \fill (S2) circle (0.07);
  \fill ($(S2)+(4,0)$) circle (0.07);
  
  %lens vertex
  \fill ($(S2)+(2,0.25)$) circle (0.07);
  \fill ($(S2)+(3,-0.25)$) circle (0.07);
  
  \draw[|-{>[width=3mm]},thick] ($(S)+(12.4,0)$) -- ($(S)+(13.1,0)$);
  \coordinate (S3) at ($(S)+(13.5,0)$);
	%meander
	\draw[thick] ($(S3)+(0.25,0)$) -- ($(S3)+(0.25,-0.5)$);
	\draw[thick] ($(S3)+(0.75,0)$) -- ($(S3)+(0.75,-0.5)$);
	\draw[thick] ($(S3)+(0.25,0)$) arc (180:0:1.75);
	\draw[thick] ($(S3)+(0.75,0)$) arc (180:0:1.25);
	\draw[thick] ($(S3)+(3.75,0)$) arc (360:180:0.75);
	\draw[thick] ($(S3)+(3.25,0)$) arc (360:180:0.25);
	\draw[thick] ($(S3)+(2.75,0)$) arc (0:180:0.75);
	\draw[thick] ($(S3)+(2.25,0)$) arc (0:180:0.25);
	\draw[thick] ($(S3)+(1.25,0)$) arc (180:360:0.25);
  \fill ($(S3)+(0.25,-0.5)$)circle (0.07);
  \fill ($(S3)+(0.75,-0.5)$) circle (0.07);

  %straight line
  \draw[color=red,thick] (S3) -- ($(S3)+(4,0)$);
  \fill (S3) circle (0.07);
  \fill ($(S3)+(4,0)$) circle (0.07);
  
  %lens vertex
  \fill ($(S3)+(2,0.125)$) circle (0.07);
  \fill ($(S3)+(1.5,-0.125)$) circle (0.07);
  \fill ($(S3)+(3,-0.125)$) circle (0.07);

  %upper part
  \coordinate (S4) at ($(S)+(2,2.5)$);
  \draw[dashed,color=gray!60!white] (S4) rectangle ($(S4)+(13.5,1)$);
  \draw[|-{>[width=3mm]},thick] ($(S)+(8.4,3)$) -- ($(S)+(9.1,3)$);
  \draw[thick] ($(S4)+(0.25,0.5)$) -- ($(S4)+(6,0.5)$);
  \draw[thick] ($(S4)+(7.5,0.25)$) -- ($(S4)+(13,0.25)$);
  \draw[thick] ($(S4)+(7.5,0.75)$) -- ($(S4)+(13,0.75)$);
  \draw[thick] ($(S4)+(13,0.75)$) arc (270:90:-0.25);
  \fill ($(S4)+(0.25,0.5)$) circle (0.07);
  \fill ($(S4)+(6,0.5)$) circle (0.07);
  \fill ($(S4)+(7.5,0.25)$) circle (0.07);
	\fill ($(S4)+(7.5,0.75)$) circle (0.07);
  \fill ($(S4)+(13,0.5)$) circle (0.07);
  
\end{tikzpicture}
    \caption{The doubling construction yields an exponential number of crossings.}
    \label{fig:exp_crossings}
\end{figure}
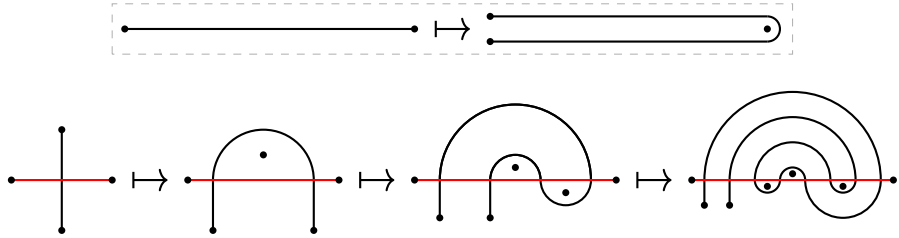

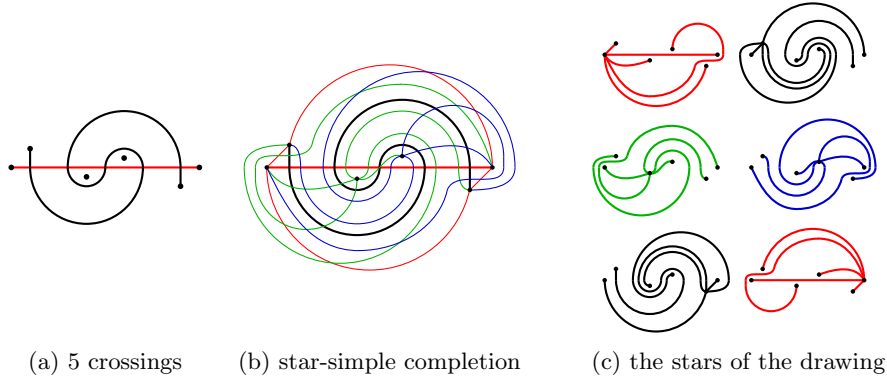
\begin{figure}[bht]
	\centering
	\savebox{\imagebox}{\begin{tikzpicture}[scale=0.15]
%each star
	\coordinate (SC) at (0,0);
	
	%%%%%%% star a
	\coordinate (SCa) at (SC);
	
	%edges
	% a--f
	\draw[color=red,thick] ($(SCa)+(0,0)$) -- ($(SCa)+(10,0)$);
	% a--b
	\draw[color=red,thick] ($(SCa)+(0,0)$) -- ($(SCa)+(1,1)$);
	% a--e
	\draw[color=red,thick] ($(SCa)+(0,0)$) arc (180:348:4.55);
	% c--a
	\draw[color=red,thick] ($(SCa)+(4,-0.5)$) to [in=300, out=200] (SCa);
	% d--a
	\draw[color=red,thick] ($(SCa)+(6,0.5)$) arc (180:0:2.25);
	\draw[color=red,thick] ($(SCa)+(10.5,0.5)$) -- ($(SCa)+(10.5,0)$);
	\draw[color=red,thick] ($(SCa)+(10.5,0)$) to [in=70, out=270] ($(SCa)+(8,-1)$);
	\draw[color=red,thick] ($(SCa)+(8,-1)$) arc (350:270:4 and 3.25);
	\draw[color=red,thick] ($(SCa)+(4.1,-3.69)$) to [in=295, out=180] (SCa);
		
	%vertices
	\fill ($(SCa)+(1,1)$) circle (0.2);
	\fill ($(SCa)+(9,-1)$) circle (0.2);
	\fill ($(SCa)+(4,-0.5)$) circle (0.2);
	\fill ($(SCa)+(6,0.5)$) circle (0.2);
	\fill (SCa)circle (0.2);
	\fill ($(SCa)+(10,0)$) circle (0.2);
	
	%%%%%%% star b
	\coordinate (SCb) at ($(SCa)+(13,0)$);
	
	%edges
	% a--b
	\draw[thick] ($(SCb)+(0,0)$) -- ($(SCb)+(1,1)$);
	% b--e
	\draw[thick] ($(SCb)+(5,0)$) arc (360:180:1);
	\draw[thick] ($(SCb)+(7,0)$) arc (360:180:3);
	\draw[thick] ($(SCb)+(5,0)$) arc (180:0:1);
	\draw[thick] ($(SCb)+(3,0)$) arc (180:0:3);
	\draw[thick] ($(SCb)+(1,0)$) -- ($(SCb)+(1,1)$);
	\draw[thick] ($(SCb)+(9,0)$) -- ($(SCb)+(9,-1)$);
	% f--b
	\draw[thick] ($(SCb)+(10,0)$) arc (0:168:4.55);
	% c--b
	\draw[thick] ($(SCb)+(4.5,0)$) arc (360:270:0.5);
	\draw[thick] ($(SCb)+(4.5,0)$) arc (180:0:1.5);
	\draw[thick] ($(SCb)+(7.5,0)$) arc (360:180:4.25);
	\draw[thick] ($(SCb)+(-1,0)$) to [in=180, out=90] ($(SCb)+(1,1)$);
	% d--b
	\draw[thick] ($(SCb)+(6.5,0)$) arc (0:90:0.5);
	\draw[thick] ($(SCb)+(6.5,0)$) arc (360:180:2.5);
	\draw[thick] ($(SCb)+(1.5,0)$) to [in=0, out=90] ($(SCb)+(1,1)$);
	
	%vertices
	\fill ($(SCb)+(1,1)$) circle (0.2);
	\fill ($(SCb)+(9,-1)$) circle (0.2);
	\fill ($(SCb)+(4,-0.5)$) circle (0.2);
	\fill ($(SCb)+(6,0.5)$) circle (0.2);
	\fill (SCb)circle (0.2);
	\fill ($(SCb)+(10,0)$) circle (0.2);
	
	%%%%%%% star c
	\coordinate (SCc) at ($(SCa)+(0,-10)$);
	
	%edges
	% c--a
	\draw[color=green!70!black,thick] ($(SCc)+(4,-0.5)$) to [in=300, out=200] (SCc);
	% c--b
	\draw[color=green!70!black,thick] ($(SCc)+(4.5,0)$) arc (360:270:0.5);
	\draw[color=green!70!black,thick] ($(SCc)+(4.5,0)$) arc (180:0:1.5);
	\draw[color=green!70!black,thick] ($(SCc)+(7.5,0)$) arc (360:180:4.25);
	\draw[color=green!70!black,thick] ($(SCc)+(-1,0)$) to [in=180, out=90] ($(SCc)+(1,1)$);
	% c--d
	\draw[color=green!70!black,thick] ($(SCc)+(4,-0.5)$) to [in=160, out=340] ($(SCc)+(6,0.5)$);
	% c--e
	\draw[color=green!70!black,thick] ($(SCc)+(3.5,0)$) arc (180:270:0.5);
	\draw[color=green!70!black,thick] ($(SCc)+(3.5,0)$) arc (180:0:2.5);
	\draw[color=green!70!black,thick] ($(SCc)+(8.5,0)$) to [in=180, out=270] ($(SCc)+(9,-1)$);
	% c--f
	\draw[color=green!70!black,thick] ($(SCc)+(4,-0.5)$) arc (360:180:2.25);
	\draw[color=green!70!black,thick] ($(SCc)+(-0.5,-0.5)$) -- ($(SCc)+(-0.5,0)$);
	\draw[color=green!70!black,thick] ($(SCc)+(-0.5,0)$) to [in=250, out=90] ($(SCc)+(2,1)$);
	\draw[color=green!70!black,thick] ($(SCc)+(2,1)$) arc (170:90:4 and 3.25);
	\draw[color=green!70!black,thick] ($(SCc)+(5.9,3.69)$) to [in=105, out=0] ($(SCc)+(10,0)$);
	
	%vertices
	\fill ($(SCc)+(1,1)$) circle (0.2);
	\fill ($(SCc)+(9,-1)$) circle (0.2);
	\fill ($(SCc)+(4,-0.5)$) circle (0.2);
	\fill ($(SCc)+(6,0.5)$) circle (0.2);
	\fill (SCc)circle (0.2);
	\fill ($(SCc)+(10,0)$) circle (0.2);
	
	%%%%%%% star d
	\coordinate (SCd) at ($(SCa)+(13,-10)$);
	
	%edges
	% d--a
	\draw[color=blue!80!black,thick] ($(SCd)+(6,0.5)$) arc (180:0:2.25);
	\draw[color=blue!80!black,thick] ($(SCd)+(10.5,0.5)$) -- ($(SCd)+(10.5,0)$);
	\draw[color=blue!80!black,thick] ($(SCd)+(10.5,0)$) to [in=70, out=270] ($(SCd)+(8,-1)$);
	\draw[color=blue!80!black,thick] ($(SCd)+(8,-1)$) arc (350:270:4 and 3.25);
	\draw[color=blue!80!black,thick] ($(SCd)+(4.1,-3.69)$) to [in=295, out=180] (SCd);
	% d--b
	\draw[color=blue!80!black,thick] ($(SCd)+(6.5,0)$) arc (0:90:0.5);
	\draw[color=blue!80!black,thick] ($(SCd)+(6.5,0)$) arc (360:180:2.5);
	\draw[color=blue!80!black,thick] ($(SCd)+(1.5,0)$) to [in=0, out=90] ($(SCd)+(1,1)$);
	% c--d
	\draw[color=blue!80!black,thick] ($(SCd)+(4,-0.5)$) to [in=160, out=340] ($(SCd)+(6,0.5)$);
	% d--f
	\draw[color=blue!80!black,thick] ($(SCd)+(6,0.5)$) to [in=120, out=20] ($(SCd)+(10,0)$) ;
	% d--e
	\draw[color=blue!80!black,thick] ($(SCd)+(5.5,0)$) arc (180:90:0.5);
	\draw[color=blue!80!black,thick] ($(SCd)+(5.5,0)$) arc (360:180:1.5);
	\draw[color=blue!80!black,thick] ($(SCd)+(2.5,0)$) arc (180:0:4.25);
	\draw[color=blue!80!black,thick] ($(SCd)+(11,0)$) to [in=0, out=270] ($(SCd)+(9,-1)$);
	
	%vertices
	\fill ($(SCd)+(1,1)$) circle (0.2);
	\fill ($(SCd)+(9,-1)$) circle (0.2);
	\fill ($(SCd)+(4,-0.5)$) circle (0.2);
	\fill ($(SCd)+(6,0.5)$) circle (0.2);
	\fill (SCd)circle (0.2);
	\fill ($(SCd)+(10,0)$) circle (0.2);
	
	%%%%%%% star e
	\coordinate (SCe) at ($(SCa)+(0,-20)$);
	
	%edges
	% a--e
	\draw[thick] ($(SCe)+(0,0)$) arc (180:348:4.55);
	% b--e
	\draw[thick] ($(SCe)+(5,0)$) arc (360:180:1);
	\draw[thick] ($(SCe)+(7,0)$) arc (360:180:3);
	\draw[thick] ($(SCe)+(5,0)$) arc (180:0:1);
	\draw[thick] ($(SCe)+(3,0)$) arc (180:0:3);
	\draw[thick] ($(SCe)+(1,0)$) -- ($(SCe)+(1,1)$);
	\draw[thick] ($(SCe)+(9,0)$) -- ($(SCe)+(9,-1)$);
	% c--e
	\draw[thick] ($(SCe)+(3.5,0)$) arc (180:270:0.5);
	\draw[thick] ($(SCe)+(3.5,0)$) arc (180:0:2.5);
	\draw[thick] ($(SCe)+(8.5,0)$) to [in=180, out=270] ($(SCe)+(9,-1)$);
	% d--e
	\draw[thick] ($(SCe)+(5.5,0)$) arc (180:90:0.5);
	\draw[thick] ($(SCe)+(5.5,0)$) arc (360:180:1.5);
	\draw[thick] ($(SCe)+(2.5,0)$) arc (180:0:4.25);
	\draw[thick] ($(SCe)+(11,0)$) to [in=0, out=270] ($(SCe)+(9,-1)$);
	% f--e
	\draw[thick] ($(SCe)+(10,0)$) -- ($(SCe)+(9,-1)$);
	
	%vertices
	\fill ($(SCe)+(1,1)$) circle (0.2);
	\fill ($(SCe)+(9,-1)$) circle (0.2);
	\fill ($(SCe)+(4,-0.5)$) circle (0.2);
	\fill ($(SCe)+(6,0.5)$) circle (0.2);
	\fill (SCe)circle (0.2);
	\fill ($(SCe)+(10,0)$) circle (0.2);
	
	%%%%%%% star f
	\coordinate (SCf) at ($(SCa)+(13,-20)$);
	
	%edges
	% a--f
	\draw[color=red,thick] ($(SCf)+(0,0)$) -- ($(SCf)+(10,0)$);
	% f--b
	\draw[color=red,thick] ($(SCf)+(10,0)$) arc (0:168:4.55);
	% c--f
	\draw[color=red,thick] ($(SCf)+(4,-0.5)$) arc (360:180:2.25);
	\draw[color=red,thick] ($(SCf)+(-0.5,-0.5)$) -- ($(SCf)+(-0.5,0)$);
	\draw[color=red,thick] ($(SCf)+(-0.5,0)$) to [in=250, out=90] ($(SCf)+(2,1)$);
	\draw[color=red,thick] ($(SCf)+(2,1)$) arc (170:90:4 and 3.25);
	\draw[color=red,thick] ($(SCf)+(5.9,3.69)$) to [in=105, out=0] ($(SCf)+(10,0)$);
  % d--f
	\draw[color=red,thick] ($(SCf)+(6,0.5)$) to [in=120, out=20] ($(SCf)+(10,0)$) ;
	% f--e
	\draw[color=red,thick] ($(SCf)+(10,0)$) -- ($(SCf)+(9,-1)$);
	
	%vertices
	\fill ($(SCf)+(1,1)$) circle (0.2);
	\fill ($(SCf)+(9,-1)$) circle (0.2);
	\fill ($(SCf)+(4,-0.5)$) circle (0.2);
	\fill ($(SCf)+(6,0.5)$) circle (0.2);
	\fill (SCf)circle (0.2);
	\fill ($(SCf)+(10,0)$) circle (0.2);

\end{tikzpicture}}
    \begin{minipage}[t]{.25\linewidth}
        \centering
        \raisebox{\dimexpr.5\ht\imagebox - .5\height}{
        \begin{tikzpicture}[scale=0.25]
	\coordinate (S) at (0,0);
	
  %straight line
	\draw[color=red,thick] ($(S)+(0,0)$) -- ($(S)+(10,0)$);
	
	%meander
	%node 1
	\draw[thick] ($(S)+(5,0)$) arc (360:180:1);
	\draw[thick] ($(S)+(7,0)$) arc (360:180:3);
	%node 2
	\draw[thick] ($(S)+(5,0)$) arc (180:0:1);
	\draw[thick] ($(S)+(3,0)$) arc (180:0:3);
	\draw[thick] ($(S)+(1,0)$) -- ($(S)+(1,1)$);
	\draw[thick] ($(S)+(9,0)$) -- ($(S)+(9,-1)$);
	
	%vertices
	%black
	\fill ($(S)+(1,1)$) circle (0.15);
	\fill ($(S)+(9,-1)$) circle (0.15);
	%lens 1
	\fill ($(S)+(4,-0.5)$) circle (0.15);
	%lens 2
	\fill ($(S)+(6,0.5)$) circle (0.15);
	%red
	\fill (S) circle (0.15);
	\fill ($(S)+(10,0)$) circle (0.15);

\end{tikzpicture}}
        \subcaption{5~crossings}\label{fig:mc0}
    \end{minipage}
    \begin{minipage}[t]{.33\linewidth}
        \centering
        \raisebox{\dimexpr.5\ht\imagebox - .5\height}{
        \begin{tikzpicture}[scale=0.3]
	\coordinate (S1) at (0,0);
	
	%all edges vertices a,b,c,d,e,f (left to right)

  %straight line
  % a--f
	\draw[color=red,thick] ($(S1)+(0,0)$) -- ($(S1)+(10,0)$);
	
	%meander
	% b--e
	\draw[thick] ($(S1)+(5,0)$) arc (360:180:1);
	\draw[thick] ($(S1)+(7,0)$) arc (360:180:3);
	\draw[thick] ($(S1)+(5,0)$) arc (180:0:1);
	\draw[thick] ($(S1)+(3,0)$) arc (180:0:3);
	\draw[thick] ($(S1)+(1,0)$) -- ($(S1)+(1,1)$);
	\draw[thick] ($(S1)+(9,0)$) -- ($(S1)+(9,-1)$);
	
	%red left
	% a--b
	\draw[color=red] ($(S1)+(0,0)$) -- ($(S1)+(1,1)$);
	% f--e
	\draw[color=red] ($(S1)+(10,0)$) -- ($(S1)+(9,-1)$);
	% a--e
	\draw[color=red] ($(S1)+(0,0)$) arc (180:348:4.55);
	% f--b
	\draw[color=red] ($(S1)+(10,0)$) arc (0:168:4.55);
	
	%green
	% c--d
	\draw[color=green!70!black] ($(S1)+(4,-0.5)$) to [in=160, out=340] ($(S1)+(6,0.5)$);
	% c--a
	\draw[color=green!70!black] ($(S1)+(4,-0.5)$) to [in=300, out=200] (S1);
	% c--b
	\draw[color=green!70!black] ($(S1)+(4.5,0)$) arc (360:270:0.5);
	\draw[color=green!70!black] ($(S1)+(4.5,0)$) arc (180:0:1.5);
	\draw[color=green!70!black] ($(S1)+(7.5,0)$) arc (360:180:4.25);
	\draw[color=green!70!black] ($(S1)+(-1,0)$) to [in=180, out=90] ($(S1)+(1,1)$);
	% c--e
	\draw[color=green!70!black] ($(S1)+(3.5,0)$) arc (180:270:0.5);
	\draw[color=green!70!black] ($(S1)+(3.5,0)$) arc (180:0:2.5);
	\draw[color=green!70!black] ($(S1)+(8.5,0)$) to [in=180, out=270] ($(S1)+(9,-1)$);
	
	% c--f
	\draw[color=green!70!black] ($(S1)+(4,-0.5)$) arc (360:180:2.25);
	%\draw[color=green!70!black] ($(S1)+(-0.5,-0.5)$) arc (180:90:1);
	\draw[color=green!70!black] ($(S1)+(-0.5,-0.5)$) -- ($(S1)+(-0.5,0)$);
	\draw[color=green!70!black] ($(S1)+(-0.5,0)$) to [in=250, out=90] ($(S1)+(2,1)$);
	\draw[color=green!70!black] ($(S1)+(2,1)$) arc (170:90:4 and 3.25);
	\draw[color=green!70!black] ($(S1)+(5.9,3.69)$) to [in=105, out=0] ($(S1)+(10,0)$);
	
	%blue
	% d--f
	\draw[color=blue!80!black] ($(S1)+(6,0.5)$) to [in=120, out=20] ($(S1)+(10,0)$) ;
	% d--e
	\draw[color=blue!80!black] ($(S1)+(5.5,0)$) arc (180:90:0.5);
	\draw[color=blue!80!black] ($(S1)+(5.5,0)$) arc (360:180:1.5);
	\draw[color=blue!80!black] ($(S1)+(2.5,0)$) arc (180:0:4.25);
	\draw[color=blue!80!black] ($(S1)+(11,0)$) to [in=0, out=270] ($(S1)+(9,-1)$);
	% d--b
	\draw[color=blue!80!black] ($(S1)+(6.5,0)$) arc (0:90:0.5);
	\draw[color=blue!80!black] ($(S1)+(6.5,0)$) arc (360:180:2.5);
	\draw[color=blue!80!black] ($(S1)+(1.5,0)$) to [in=0, out=90] ($(S1)+(1,1)$);
	% d--a
	\draw[color=blue!80!black] ($(S1)+(6,0.5)$) arc (180:0:2.25);
	%\draw[color=blue!80!black] ($(S1)+(10.5,0.5)$) arc (360:270:1);
	\draw[color=blue!80!black] ($(S1)+(10.5,0.5)$) -- ($(S1)+(10.5,0)$);
	\draw[color=blue!80!black] ($(S1)+(10.5,0)$) to [in=70, out=270] ($(S1)+(8,-1)$);
	\draw[color=blue!80!black] ($(S1)+(8,-1)$) arc (350:270:4 and 3.25);
	\draw[color=blue!80!black] ($(S1)+(4.1,-3.69)$) to [in=295, out=180] (S1);

	%vertices
	%black
	\fill ($(S1)+(1,1)$) circle (0.1);
	\fill ($(S1)+(9,-1)$) circle (0.1);
	%lens 1
	\fill ($(S1)+(4,-0.5)$) circle (0.1);
	%lens 2
	\fill ($(S1)+(6,0.5)$) circle (0.1);
	%red
	\fill (S1) circle (0.1);
	\fill ($(S1)+(10,0)$) circle (0.1);

\end{tikzpicture}}
        \subcaption{star-simple completion% of (a)
        }\label{fig:mc-all}
    \end{minipage}\hfill
    \begin{minipage}[t]{.37\linewidth}
        \centering
        \usebox{\imagebox}
        \subcaption{the stars of the drawing% in~(b).
        }\label{fig:mc-stars}
    \end{minipage}
    \caption{Two edges with $2^{n-4}+2^{n-6}$ crossings in a
        star-simple drawing of $K_n$, for $n=6$.\label{fig:many-crossings}} 
\end{figure}

%%%%%%%%%%%%%%%%%%%%%%%%%%%%%%%%%%%%%%%%%%%%%%%%%%%%%%%%%%%%%%%%%%%%%%%%%%%%%%%%%%%%%%%%%%
\section{Crossing patterns}

In this section we study the induced drawing~$D(e,e')$ of two independent
edges~$e$ and $e'$ in a star-simple drawing~$D$ of the complete
graph. 
We start by observing that the endpoints of $e$ and $e'$ must lie in the same
region of~$D(e,e')$. 
This fact was also used in earlier work by Aichholzer~et~al.~\cite{semisimple17} and
by Kyn{\v c}l~\cite{Kyncl2020}.  

\begin{lemma}\label{lem:all-out}
    The four vertices incident to~$e$ and~$e'$ belong to the same
    region of~$D(e,e')$.
\end{lemma}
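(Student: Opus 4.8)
The plan is to exploit that the host graph is complete: any two of the four vertices of $e$ and $e'$ are joined by an edge of $K_n$, and each such edge is adjacent to both $e$ and $e'$, hence by star-simplicity crosses neither. Write $e=ab$ and $e'=cd$; since $e$ and $e'$ are independent, $a,b,c,d$ are four distinct points, and since edges of a topological drawing contain no vertices other than their own endpoints, each of $a,b$ lies on the arc $e$ but not on the arc $e'$, and symmetrically for $c,d$. I would first record the (easy) fact that each of $a,b,c,d$ is incident to exactly one region of $D(e,e')$: such a vertex, say $a$, lies on only one of the two arcs, and the other arc is a compact set not containing $a$, hence stays outside some disk around $a$; so in that disk the complement of $e\cup e'$ is connected, i.e.\ $a$ is a degree-$1$ vertex of the plane graph $D(e,e')$ and touches a single face. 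Call this face \emph{the region of}~$a$, and likewise for the other three vertices; the lemma then becomes the statement that these four regions coincide.

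The core step is to look at a ``cross'' edge, say $f=ac$ of $K_n$. Since $f$ shares the vertex $a$ with $e$ and the vertex $c$ with $e'$, star-simplicity forbids any crossing of $f$ with $e$ or with $e'$; combined with the axioms of a topological drawing (common points of two edges are crossings or shared endpoints, and edges avoid all other vertices), this forces the relative interior of $f$ to be disjoint from $e\cup e'$. Therefore $\mathrm{int}(f)$ is contained in a single region $F$ of $D(e,e')$, and $F$ has both $a$ and $c$ on its boundary, approached along $f$ from inside the (unique) face wedge at $a$ and at $c$ respectively. Hence $F$ is the region of $a$ and also the region of $c$, so these two regions are equal.

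Finally I would run the same argument on the edges $ad$ and $bc$: the region of $a$ equals the region of $d$, and the region of $b$ equals the region of $c$. Chaining the three equalities shows that $a,b,c,d$ all lie in one region of $D(e,e')$. I do not expect a genuine obstacle here; the only point needing a careful sentence is the one in the first paragraph, namely that an endpoint of $e$ or of $e'$ is incident to a \emph{unique} region of $D(e,e')$ so that ``the region of a vertex'' is well defined — and, as indicated, this is immediate because such a vertex lies on exactly one of the two arcs while the other arc keeps a positive distance from it. (This is also exactly the place where completeness of $K_n$ is used: for a general graph the cross edges need not be present, and the statement indeed fails.)
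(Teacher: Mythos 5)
Your proof is correct and uses essentially the same idea as the paper: the edges of $K_n$ joining an endpoint of $e$ to an endpoint of $e'$ are adjacent to both, so by star-simplicity they avoid $e\cup e'$ and force the corresponding endpoints into a common region. The paper phrases this as a contradiction (a vertex of $e$ and a vertex of $e'$ in different regions would force the connecting edge to cross $e$ or $e'$), whereas you argue directly and chain three such edges, additionally spelling out the well-definedness of ``the region of a vertex''---a presentational difference only.
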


\begin{proof}
    Assuming that the two edges cross at least two times, the
    drawing~$D(e,e')$ has at least two regions.
    Otherwise, the statement is trivial.
    If the four vertices do not belong to the same region of~$D(e,e')$, 
    then there is a vertex~$u$ of~$e$ and a vertex~$v$
    of~$e'$ that belong to different regions. 
    Now consider the edge~$uv$ in the drawing~$D$ of the complete graph. 
    This edge has ends in different regions of~$D(e,e')$, 
    whence it has a crossing with
    either~$e$ or~$e'$. This, however, makes a crossing in
    the star of~$u$ or~$v$. This contradicts the assumption that~$D$ is
    a star-simple drawing.
\end{proof}

Lemma~\ref{lem:all-out} implies that the deadlock
configurations as shown in \figurename~\ref{fig:deadlocks}
do not occur in star-simple drawings of complete graphs.
Formally, a \emph{deadlock} is a pair~$e,e'$ of edges such that
not all incident vertices lie in the same region of the drawing~$D(e,e')$.

Now suppose that~$D$ is a star-simple drawing of a complete graph with
no empty lens.  In this case we can argue that~$e$ and~$e'$ do not
form a configuration as the black edge~$e$ and the red edge~$e'$ in
\figurename~\ref{fig:spiral}. 
Indeed, that configuration has an
interior lens~$L$ and by assumption this lens is non-empty, i.e.,~$L$
contains a vertex~$x$. 
Let~$e$ and~$e'$ be the black and the red edge in
\figurename~\ref{fig:spiral}, respectively, and let~$u$ be a vertex
 of~$e$. The edge~$xu$ (the green edge in the figure) has no crossing with~$e$, hence
it follows the "tunnel" of the black edge. This yields a deadlock configuration of
the edges~$xu$ and~$e'$. Note that if in
\figurename~\ref{fig:spiral} instead of drawing the green edge $xu$ we connect~$x$ with an edge~$f$ to one of the vertices of the red edge~$e'$ such that~$f$ and the red edge have no crossing, then~$f$ and the black edge~$e$
form a deadlock.

\begin{figure}[bht]%
	\captionsetup[subfigure]{justification=centering}
	\hbox{}\hfill\begin{minipage}[b]{.63\linewidth}
		\centering
		\begin{tikzpicture}[scale=0.64]
  \coordinate (S) at (0,0);
  \coordinate (L) at ($(S)+(-0.5,-3.5)$);
  
%deadlock two lens nodes
  \fill[color=gray!20!white] ($(S)+(-2,0)$) arc (180:360:2);
  \fill[color=gray!20!white] (S) arc (0:190:1);
  \node at ($(S)+(0,-2)$) [anchor=north] {$e$};
  \fill ($(S)+(0,-0.5)$) circle (0.1);
  \draw[thick] ($(S)+(0,-0.5)$) -- (S);
  \draw[thick] (S) arc (0:180:1);
  \draw[thick] ($(S)+(-2,0)$) arc (180:360:2);
  \draw[thick] ($(S)+(2,0)$) -- ($(S)+(2,0.5)$);
  \fill ($(S)+(2,0.5)$) circle (0.1);
  %straight line
  \draw[color=red, thick] ($(S)+(-1,0)$) -- (2.5,0);
  \node[color=red] at ($(S)+(1,0)$) [anchor=south] {$e'$};
  \fill ($(S)+(-1,0)$) circle (0.1);
  \fill ($(S)+(2.5,0)$) circle (0.1);

%deadlock one lens node
  \coordinate (A) at ($(S)+(6,0)$);
  %spiral 
  \fill[color=gray!20!white] ($(A)+(-2,0)$) arc (180:360:2);
  \node at ($(A)+(0,-2)$) [anchor=north] {$e$};
  \fill ($(A)+(0,-0.5)$) circle (0.1);
  \draw[thick] ($(A)+(0,-0.5)$) -- (A);
  \draw[thick] (A) arc (0:180:1);
  \draw[thick] ($(A)+(-2,0)$) arc (180:360:2);
  \draw[thick] ($(A)+(2,0)$) -- ($(A)+(2,0.5)$);
  \fill ($(A)+(2,0.5)$) circle (0.1);
  \fill ($(A)+(-1,0.5)$) circle (0.1);
  %straight line
  \draw[color=red, thick] ($(A)+(-2.5,0)$) -- ($(A)+(2.5,0)$);
  \node[color=red] at ($(A)+(1,0)$) [anchor=south] {$e'$};
  \fill ($(A)+(-2.5,0)$) circle (0.1);
  \fill ($(A)+(2.5,0)$) circle (0.1);
\end{tikzpicture}\vspace{9pt}
		\subcaption{deadlocks}
		\label{fig:deadlocks}
	\end{minipage}\hfill
	\begin{minipage}[b]{.37\linewidth}
		\centering
		\begin{tikzpicture}[scale=0.64]
  \coordinate (S) at (0,0);
  \coordinate (L) at ($(S)+(-0.5,-3.5)$);

%spiral
  \coordinate (C) at ($(A)+(6.5,0)$);
  %inner spiral
  \fill[color=green!10!white] ($(C)+0.5*(-3,0)$) arc (180:360:2);
  \fill[color=green!70!black] ($(C)+0.5*(0.9,-0.4)$) circle (0.1);
  \node[color=green!70!black] at ($(C)+0.5*(0.9,-1)$)  [anchor=north] {$x$};
  %\node[color=green!70!black] at ($(C)+0.5*(4.7,-0.3)$)  [anchor=north west] {$f$};
  \draw[color=green!70!black,thick] ($(C)+0.5*(1,0)$) arc (360:340:0.5);
  \draw[color=green!70!black,thick] ($(C)+0.5*(1,0)$) arc (0:180:1);
  \draw[color=green!70!black,thick] ($(C)+0.5*(-3,0)$) arc (180:360:2);
  \draw[color=green!70!black,thick] ($(C)+0.5*(5,0)$) arc (0:75:0.9);
  %double spiral
  \fill ($(C)+0.5*(3.65,1.75)$) circle (0.1);
  \node at ($(C)+0.5*(3.65,1.75)$)  [anchor=south] {$u$};
  \fill ($(C)+0.5*(5.75,1.8)$) circle (0.1);
  \node at ($(C)+0.5*(1,-5)$)  [anchor=north] {$e$};
  \draw[thick] (C) arc (180:360:0.5);
  \draw[thick] (C) arc (0:180:0.5);
  \draw[thick] ($(C)+0.5*(2,0)$) arc (0:180:1.5);
  \draw[thick] ($(C)+0.5*(-2,0)$) arc (180:360:1.5);
  \draw[thick] ($(C)+0.5*(-4,0)$) arc (180:360:2.5);
  \draw[thick] ($(C)+0.5*(4,0)$) arc (0:20:2.5);
  \draw[thick] ($(C)+0.5*(6,0)$) arc (0:15:3.5);
  \draw[thick] (C) arc (180:360:0.5);
	%straight line
  \draw[color=red,thick] ($(C)+0.5*(-5,0)$) -- ($(C)+0.5*(7,0)$);
  \node[color=red] at ($(C)+0.5*(3,0)$) [anchor=south] {$e'$};
  \fill ($(C)+0.5*(-5,0)$) circle (0.1);
  \fill ($(C)+0.5*(7,0)$) circle (0.1);
  \fill ($(C)+0.5*(-1,0.5)$) circle (0.1);
\end{tikzpicture}
		\subcaption{spiral}
		\label{fig:spiral}
	\end{minipage}\hfill\hbox{}
	\caption{Constructions to achieve an unbounded number of crossings.}
	\label{fig:deadlock-spiral}
\end{figure}
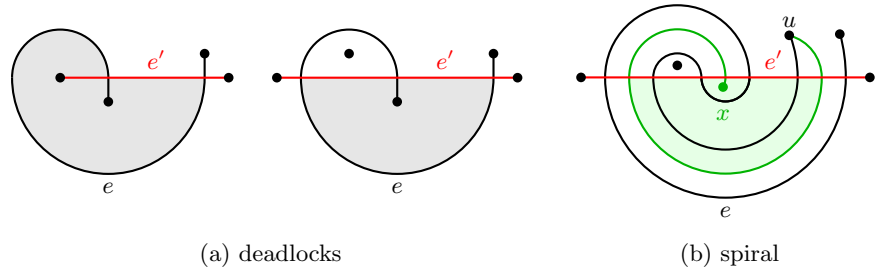
%%%%%%%%%%%%%%%%%%%%%%%%%%%%%%%%%%%%%%%%%%%%%%%%%%%%%%%%
%\begin{figure}[hbt]
%    \begin{minipage}[t]{.63\linewidth}
%        %\renewcommand{\thelinenumber}{}%
%        \centering
%        %\includegraphics[page=1]{}
%        \input{figures/deadlock-spiral_green}
%        %\subcaption{deadlocks}\label{fig:deadlock}
%    \end{minipage}
%    \caption{Forbidden patterns.}
%    \label{fig:deadlock-spiral}
%\end{figure}
%%%%%%%%%%%%%%%%%%%%%%%%%%%%%%%%%%%%%%%%%%%%%%%%%%%%%%%%

We use this intuition to formally define a spiral.  Two edges~$e,e'$
form a \emph{spiral} if they form a lens~$L$ such that if we place a
vertex~$x$ in~$L$ and draw a curve~$\gamma$ connecting~$x$ to a
vertex~$u$ of~$e$ so that~$\gamma$ does not cross~$e$, then~$\gamma$
and~$e'$ form a deadlock.
The discussion above proves the following lemma:
\begin{lemma}\label{lem:no-spiral}
 A star-simple drawing of a complete graph with
no empty lens has no pair $e$, $e'$ of edges that form a spiral.
 \end{lemma}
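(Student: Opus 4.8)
The plan is to argue by contradiction, turning the informal discussion that precedes the lemma into a formal argument. Suppose $D$ is a star-simple drawing of a complete graph $K_n$ with no empty lens and that some pair $e,e'$ of its edges forms a spiral. By the definition of a spiral, $e$ and $e'$ bound a lens $L$; since $L$ is a face of $D(e,e')$ it contains no point of $e$ or of $e'$ in its interior, and since $D$ has no empty lens, $L$ contains a vertex $x$ in its interior.

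Next I would exhibit a concrete curve witnessing the spiral condition. Let $u$ be one of the two endpoints of $e$. Because the interior of $L$ is disjoint from $e$ and $e'$, the point $x$ differs from $u$ and from both endpoints of $e'$; moreover $u$ is not an endpoint of $e'$, as $e$ and $e'$ are independent. Hence $xu$ is an edge of $D$ and it is independent of $e'$. Since $xu$ and $e$ share the vertex $u$, the star-simple property forces them not to cross, so the Jordan arc $\gamma$ that $D$ draws for the edge $xu$ is precisely a curve of the type appearing in the definition of a spiral: it joins the vertex $x\in L$ to the vertex $u$ of $e$ without crossing $e$.

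Now I would play off the two resulting facts against each other. On one hand, the spiral hypothesis (read as a statement about \emph{every} admissible curve $\gamma$) tells us that $\gamma$ and $e'$ form a deadlock; since $\gamma$ is realized as the genuine edge $xu$ of $D$, this means that the four vertices incident to $xu$ and $e'$ do not all lie in the same region of $D(xu,e')$. On the other hand, $xu$ and $e'$ are two independent edges of the star-simple drawing $D$ of the complete graph, so Lemma~\ref{lem:all-out} applies to them and their four incident vertices \emph{do} lie in the same region of $D(xu,e')$; that is, $xu$ and $e'$ do not form a deadlock. This contradiction proves the lemma.

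The only point needing care — and the step I expect to be the main (and fairly minor) obstacle — is verifying that the arc $D$ assigns to $xu$ really qualifies as an admissible curve in the definition of a spiral and that Lemma~\ref{lem:all-out} is applicable to the pair $(xu,e')$: concretely, that $x\ne u$, that $x$ is not an endpoint of $e'$, and that $u$ is not an endpoint of $e'$. All three follow from the facts that the interior of the lens $L$ meets neither $e$ nor $e'$ and that $e$ and $e'$ are independent. Everything else is a direct transcription of the argument illustrated around \figurename~\ref{fig:spiral}.
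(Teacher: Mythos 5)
Your proposal is correct and follows essentially the same route as the paper: the paper's ``discussion above'' argument likewise places a vertex $x$ in the non-empty lens, notes that the edge $xu$ cannot cross $e$ by star-simplicity, concludes from the spiral definition that $xu$ and $e'$ form a deadlock, and contradicts Lemma~\ref{lem:all-out}. Your extra care in checking that $xu$ is an edge of the drawing independent of $e'$ (so that Lemma~\ref{lem:all-out} applies) is exactly the right bookkeeping and matches the paper's intent.
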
  

%%%%%%%%%%%%%%%%%%%%%%%%%%%%%%%%%%%%%%%%%%%%%%%%%%%%%%%%%%%%%%%%%%%%%%%%%%%%%%%%%%%%%%%%%
\section{Crossings of pairs of edges}

In this section we derive an upper bound for the number of crossings
of two edges in a star-simple drawing of~$K_n$ with no
empty lens. %Actually, we show the following.

\goodbreak
\begin{theorem}\label{thm:maxcross}
    Consider a star-simple drawing of~$K_n$ with no empty lens.
    If~$C(k)$ is the maximum number of crossings of a pair of edges that
    (a)~form no deadlock and no spiral and such that (b)~all lenses
    formed by the two edges can be hit by~$k$ points,
    then~$C(k) \leq \mathrm{e}\cdot k!$,
    where $\mathrm{e}\approx 2.718$ is Euler's number. 
\end{theorem}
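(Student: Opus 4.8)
The plan is to prove by induction on $k$ the recursion
\[
 C(0)\le 1,\qquad C(k)\le k\cdot C(k-1)+1\quad\text{for }k\ge 1 .
\]
Dividing by $k!$ and telescoping gives $C(k)/k!\le C(0)+\sum_{j=1}^{k}1/j!\le\sum_{j\ge 0}1/j!=\mathrm e$, so $C(k)\le\mathrm e\cdot k!$ (in fact $C(k)\le k!\sum_{j=0}^{k}1/j!=\lfloor\mathrm e\cdot k!\rfloor$). Since no lens is empty and distinct lenses are disjoint faces, a hitting set for the lenses of $D(e,e')$ can always be taken to consist of one vertex of $K_n$ per lens; hence $k\le n-4$, and $C(n-4)\le\mathrm e\,(n-4)!<3\,(n-4)!$, the bound stated in the abstract.

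\noindent\textbf{Base case.}
If $k=0$ then $e$ and $e'$ bound no lens. Two simple arcs crossing at least twice always bound a lens: taking an innermost bigon among those formed by a sub-arc of $e$ and a sub-arc of $e'$ (one exists, between two crossings consecutive along $e$) yields a face bounded by two arcs. So a lens-free pair crosses at most once, giving $C(0)\le 1$.

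\noindent\textbf{Inductive step.}
Let $e,e'$ attain $C(k)$, with $m$ crossings, forming no deadlock and no spiral; by Lemma~\ref{lem:all-out} all four endpoints lie in the outer face $R_0$ of $D(e,e')$, and as distinct lenses are disjoint faces, $D(e,e')$ has at most $k$ lenses. I would decompose $e'$ into its \emph{excursions} from $R_0$ --- the maximal sub-arcs of $e'$ disjoint from $R_0$. Since $e'$ starts and ends in $R_0$, every crossing lies on an excursion, and one checks (descending to an innermost sub-arc) that each excursion carries at least one lens, so there are at most $k$ of them. Completing an excursion to an edge (attaching two new vertices inside $R_0$ near its endpoints) yields a pair --- that edge together with $e$ --- that inherits ``no deadlock'' (all four relevant vertices lie in the outer face of this coarser drawing, which contains $R_0$) and, unless this single excursion carries all $k$ lenses, has at most $k-1$ lenses and still no spiral, hence by induction at most $C(k-1)$ crossings; summing over the at most $k$ excursions gives $m\le k\cdot C(k-1)\le k\cdot C(k-1)+1$. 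In the remaining case, where $e$ and $e'$ each make one excursion carrying all $k$ lenses, I would instead ``peel'' the lens $L$ met first along $e$ from an endpoint: cutting the sphere along $e$ makes $D(e,e')$ a disk in which $e'$ becomes a system of disjoint arcs and $L$ is an outermost ``ear''; rerouting a short piece of $e'$ across the empty bigon $L$ destroys $L$ and at most one of its bounding crossings, and --- this is where Lemma~\ref{lem:no-spiral} enters --- preserves ``no deadlock'' and ``no spiral'' while leaving at most $k-1$ lenses, so induction again bounds the result by $C(k-1)$ and, after accounting for the destroyed crossing and iterating the peeling if necessary, one obtains $m\le k\cdot C(k-1)+1$.

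\noindent\textbf{Main obstacle.}
The technical heart is the surgery: showing that completing an excursion, respectively rerouting across the outermost lens, really does decrease the number of lenses and preserve the no-spiral condition. This is precisely where no-spiral must be spent --- without it Lemma~\ref{lem:no-spiral} is false and, by \figurename~\ref{fig:spiraling}, $C(k)$ is already unbounded for fixed $k$ --- and I expect it to require a careful analysis in the cut-open disk of how the arcs forming $e'$ can wind and re-enter the bigon bounded by the two arcs of the peeled lens; a secondary point is to run the bookkeeping tightly enough to land on the clean recursion $C(k)\le k\cdot C(k-1)+1$ rather than something weaker.
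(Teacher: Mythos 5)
Your final calculus (the recursion $C(0)\le 1$, $C(k)\le k\,C(k-1)+1$, hence $C(k)\le k!\sum_{s\le k}1/s!\le \mathrm{e}\cdot k!$) coincides with the paper's, but your route to the recursion is a different, surgery-based one, and it has genuine gaps. The paper never modifies $e$ or $e'$: it draws, for each hitting point $p_i$, an auxiliary edge $e_i$ from $p_i$ to an endpoint $u$ of $e$ that avoids $e$ and minimizes crossings with $e'$, and proves that $(e_i,e')$ forms no deadlock/spiral, that its lenses are hit by the remaining $k-1$ points, and that between any two consecutive crossings of $e$ and $e'$ along $e'$ some $e_i$ crosses $e'$; these three facts give $C(k)\le 1+\sum_i X_i\le k\,C(k-1)+1$. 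Your excursion decomposition is already problematic at the definition: interior points of $e'$ never lie in the open face $R_0$, so you must mean sub-arcs of $e'$ avoiding the \emph{closure} of $R_0$; under that reading the claim ``every crossing lies on an excursion'' is false --- for a plain up-and-down meander every gap of $e'$ and every crossing is incident to the outer face, so there are no excursions at all despite many crossings --- while under the open reading all of $e'$ is a single excursion and your ``generic case'' is vacuous. Moreover the injection ``each excursion carries at least one lens'' (needed to bound the number of excursions by $k$) and the claim that the pair (excursion-edge, $e$) needs only $k-1$ hitting points are unjustified: $e'$ can leave a bag through its $e$-side, touch the outer boundary, and re-enter, so a bag whose gap lies in one excursion may contain only lenses whose gaps lie in other excursions. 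Getting the hitting number to drop is exactly what the paper's property (P2) achieves, via an argument that $p_i$ lies in the outer face of $D(e_i,e')$; nothing analogous is supplied here.

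The ``peeling'' step for the exceptional case is not merely unproven; as described it cannot work. If rerouting $e'$ across an outermost lens destroyed that lens and only $O(1)$ crossings while preserving ``no deadlock, no spiral'' and leaving all lenses hit by $k-1$ points, induction would yield $C(k)\le C(k-1)+O(1)$, i.e.\ $C(k)=O(k)$. But the doubling construction of \figurename~\ref{fig:exp_crossings} satisfies all hypotheses of the theorem (no deadlock by Lemma~\ref{lem:all-out}, no spiral by Lemma~\ref{lem:no-spiral}, all lenses hit by the $k\le n-4$ remaining vertices) and shows $C(k)\ge 2^{k}$. Hence the preservation claims must fail: the face merge caused by the reroute can create a new lens containing none of the remaining points, or turn the pair into a spiral --- precisely the difficulty the paper avoids by doing no surgery on $e,e'$ and instead routing auxiliary edges from the hitting points. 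Since you yourself flag this surgery analysis as the ``technical heart'' still to be carried out, the proposal is at present a plan rather than a proof, and the missing lemmas are where essentially all of the paper's work (its properties (P1)--(P3)) resides.
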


{\renewcommand\endproof{\oldendproof}
\begin{proof}
    Due to Lemma~\ref{lem:all-out} we can assume that all four vertices of~$e$
    and~$e'$ are on the outer face of the drawing~$D(e,e')$.  We think of~$e'$ as being drawn red and horizontally and of~$e$ as being a black
    meander edge.  Let~$p_1,\ldots,p_k$ be points hitting all the lenses
    of the drawing~$D(e,e')$. Let~$u$ be one of the endpoints of~$e$. For
    each~$i=1,\ldots,k$ we draw an edge~$e_i$ connecting~$p_i$ to~$u$ such
    that~$e_i$ has no crossing with~$e$ and, subject to this, the number of
    crossings with~$e'$ is
    minimized. \figurename~\ref{fig:bigexample_reduced1}
    shows an example.

    Note that we do not claim that all these edges~$e_1,\ldots,e_k$
    together with~$e$ and~$e'$ can be extended to a star-simple
    drawing of a complete graph. Therefore, we cannot use
    Lemma~\ref{lem:no-spiral} directly but state the assumption~(a)
    instead.

    \begin{figure}[htbp]
        \centering
        \begin{tikzpicture}[scale=0.23]
	\coordinate (S) at (0,0);
	
%green exit
\node[color=green!70!black] (L) at ($(S)+(24,-0.8)$) [anchor=north]{$p_i$};
\fill[color=green!70!black] ($(S)+(24,-0.5)$) circle (0.2);
\draw[color=green!70!black,thick] ($(S)+(24,-0.5)$) -- ($(S)+(24,0)$);
\draw[color=green!70!black,thick] ($(S)+(24,0)$) arc (0:180:4 and 2);
\draw[color=green!70!black,thick] ($(S)+(16,0)$) arc (360:180:4);
\draw[color=green!70!black,thick] ($(S)+(8,0)$) arc (0:180:2);
\draw[color=green!70!black,thick] ($(S)+(8,0)$) arc (0:180:2);
\draw[color=green!70!black,thick] ($(S)+(4,0)$) arc (180:353:12.55 and 9.5);
\node[color=green!70!black] at ($(S)+(23.8,-8)$) [anchor=north]{$e_i$};

%meander
  \draw[thick] ($(S)+(1,0)$) arc (180:360:1);
  \draw[thick] ($(S)+(3,0)$) arc (180:0:3);
  \draw[thick] ($(S)+(5,0)$) arc (180:0:1);
  \draw[thick] ($(S)+(5,0)$) arc (180:360:11 and 8);
  \draw[thick] ($(S)+(7,0)$) arc (180:360:5);
  \draw[thick] ($(S)+(9,0)$) arc (180:360:3);
  \draw[thick] ($(S)+(11,0)$) arc (180:360:1);
  \draw[thick] ($(S)+(11,0)$) arc (180:0:9 and 7);
  \node at ($(S)+(20,5.5)$) [anchor=south]{$e$};
  \draw[thick] ($(S)+(13,0)$) arc (180:0:7 and 5);
  \draw[thick] ($(S)+(15,0)$) arc (180:0:5 and 3);
  \draw[thick] ($(S)+(17,0)$) arc (180:0:1);
  \draw[thick] ($(S)+(19,0)$) arc (180:360:1);
  \draw[thick] ($(S)+(21,0)$) arc (180:0:1);
  \draw[thick] ($(S)+(23,0)$) arc (180:360:1);
  \draw[thick] ($(S)+(1,0)$) -- ($(S)+(1,1)$);
  \draw[thick] ($(S)+(29,0)$) -- ($(S)+(29,-1)$);

%the other lenses
  \fill ($(S)+(2,-0.5)$) circle (0.2);
  \fill ($(S)+(6,0.5)$) circle (0.2);
  \fill ($(S)+(18,0.5)$) circle (0.2);
  \fill ($(S)+(22,0.5)$) circle (0.2);
  \fill ($(S)+(12,-0.5)$) circle (0.2);
  \fill ($(S)+(20,-0.5)$) circle (0.2);

  \fill ($(S)+(1,1)$) circle (0.2);
  \fill ($(S)+(29,-1)$) circle (0.2);
  \node (V) at ($(S)+(29,-1)$) [anchor=west]{$u$};

  %straight line
  \draw[color=red,thick] ($(S)-(1,0)$) -- ($(S)+(31,0)$);
  \node[color=red] at ($(S)+(10,0)$) [anchor=south]{$e'$};
  \fill ($(S)-(1,0)$) circle (0.2);
  \fill ($(S)+(31,0)$) circle (0.2);

\end{tikzpicture}
        \caption{The drawing~$D(e,e')$ and an edge~$e_i$ connecting~$p_i$ to~$u$.} 
        \label{fig:bigexample_reduced1}
    \end{figure}
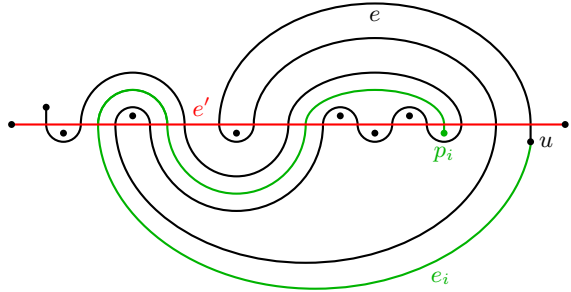
    
    \noindent We claim the following three properties:
    
    \begin{enumerate}[leftmargin=*,label={(P\arabic*)}]\setlength{\itemsep}{\labelsep}
        \item\label{prop:1} The edges $e_i$ and~$e'$ form no deadlock and no spiral.
        \item\label{prop:2} All %the
          lenses of~$e_i$ and~$e'$ are hit by the~$k-1$
          points~$p_1,\ldots,p_{i-1},p_{i+1},\ldots,p_k.$
        \item\label{prop:3} Between any two crossings of~$e$ and~$e'$ from
        left to right, i.e., in the order along~$e'$, there is at least one
        crossing of~$e'$ with one of the edges~$e_i$.
    \end{enumerate}
    
    Before proving the properties, we show that they imply the
    statement of the theorem by induction on $k$. The base case
    $1=C(0) \leq \mathrm{e}\cdot 0! = \mathrm{e}$ is
    %% and $2=C(1) \leq \mathrm{e}\cdot 1! = \mathrm{e}$ are
    obvious. From
    \ref{prop:1}~and \ref{prop:2}~we see that the number~$X_i$ of
    crossings of~$e_i$ and~$e'$ is upper bounded by~$C(k-1)$. From
    \ref{prop:3}~we obtain that~$C(k) \leq 1+ \sum_i X_i$. Combining
    these we get

    \noindent\begin{minipage}[b]{.9\linewidth}
      \[
        C(k) \leq k\cdot C(k-1) +1 \leq k!\cdot
        \sum_{s=0}^{k}{\frac{1}{s!}} \leq k!\cdot \mathrm{e}.
      \]
    \end{minipage}
    \begin{minipage}[b]{.09\linewidth}
      \hfill\qed
    \end{minipage}
\end{proof}
}

For the proof of the three claims we need some notation.
Let~$\xi_1,\xi_2,\ldots,\xi_N$ be the crossings of~$e$ and~$e'$
indexed according to the left to right order along the horizontal
edge~$e'$.  Let~$g_i$ and~$h_i$ be the pieces of~$e'$ and~$e$,
respectively, between crossings~$\xi_i$ and~$\xi_{i+1}$. The bounded
region enclosed by~$g_i \cup h_i$ is the \emph{bag}~$B_i$ and~$g_i$ is
the \emph{gap} of the bag. In the drawing $D(e,e')$ the bags~$B_i$
where~$h_i$ is a crossing free piece of~$e$ are exactly the
inclusion-wise minimal lenses formed by~$e$ and~$e'$.  From now on when
referring to a \emph{lens} we always mean such a minimal lens. Indeed
if there is no empty minimal lens, then there is no empty lens.
The following observation is crucial.

\begin{obs}\label{obs:crucial}
    For two bags~$B_i$ and~$B_j$ the open interiors are either
    disjoint or one is contained in the other.
\end{obs}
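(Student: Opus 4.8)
The plan is to show that the boundary curves $\partial B_i = g_i\cup h_i$ and $\partial B_j = g_j\cup h_j$ are simple closed curves that do not cross each other transversally, and then to read off the claim from the Jordan curve theorem. First I would check that $\partial B_i$ really is a simple closed curve: $g_i$ is a subsegment of the line~$e'$ and $h_i$ is a subarc of the Jordan arc~$e$, and since the relative interior of the gap~$g_i$ contains no crossing of $e$ and~$e'$, these two arcs meet only in their common endpoints $\xi_i$ and~$\xi_{i+1}$. Hence $B_i$ is an open topological disk, and so is $B_j$. Moreover $\partial B_i$ contains no vertex (the gap $g_i$ lies strictly between two crossings on~$e'$ and $h_i$ strictly between two crossings on~$e$), and for $i\neq j$ we have $g_i\neq g_j$, so $\partial B_i\neq\partial B_j$.

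For the non-crossing part, let $p$ be a point of $\partial B_i\cap\partial B_j$. Since both curves lie in $e\cup e'$ and avoid all vertices, $p$ is a point of~$e$ that is neither a crossing nor a vertex, or such a point of~$e'$, or a crossing~$\xi_\ell$. In the first two cases both curves locally follow a single simple arc (either $e$ or~$e'$) through~$p$, so they cannot cross there. The essential case is $p=\xi_\ell$, a degree-$4$ point where $e$ and $e'$ cross transversally. The key fact is that a gap~$g_m$ passes through~$\xi_\ell$ only when $\xi_\ell$ is an endpoint of~$g_m$, and then it occupies exactly one of the two directions of~$e'$ at~$\xi_\ell$, whereas the arcs~$h_m$ occupy only directions of~$e$. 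Therefore neither $\partial B_i$ nor $\partial B_j$ occupies both $e'$-directions at~$\xi_\ell$. But a transversal crossing of two curves at this degree-$4$ point would force one of them to occupy both $e'$-directions and the other both $e$-directions, the two antipodal pairs; this is impossible. The same direction count shows that wherever $\partial B_i$ and $\partial B_j$ share a maximal subarc---necessarily a subarc of~$e$, since distinct gaps have disjoint relative interiors---the two curves split off at its endpoints (which are crossings) without passing through each other.

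It remains to invoke the Jordan curve theorem: because $\partial B_i$ never passes from one side of~$\partial B_j$ to the other, the connected curve $\partial B_i$ lies in the closure of a single one of the two regions determined by~$\partial B_j$. If that region is the bounded one,~$\overline{B_j}$, then $B_i\subseteq B_j$. If it is the unbounded region, then $\partial B_i$ is disjoint from~$B_j$, hence $B_j$ lies in a single region of~$\partial B_i$, giving $B_j\subseteq B_i$ or $B_i\cap B_j=\emptyset$. In all cases the open interiors of~$B_i$ and~$B_j$ are disjoint or nested. I expect the delicate point to be making the degree-$4$ crossing and shared-subarc analysis fully rigorous; everything there rests on the single observation that each gap is crossing-free in its relative interior, so it contributes at most one of the two $e'$-directions at any crossing.
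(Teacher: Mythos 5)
Your overall route is the same as the paper's: each bag is bounded by a Jordan curve, boundaries of distinct bags never cross, and nestedness or disjointness of the open interiors then follows from the Jordan curve theorem. (The paper disposes of the non-crossing claim in one sentence; you try to actually prove it.) Your analysis of isolated intersection points is correct: at a crossing $\xi_\ell$ a bag boundary uses either both $e$-branches or one $e$-branch and one $e'$-branch, never both $e'$-branches, and among the two-branch pairs only the two antipodal pairs interleave, so an isolated transversal crossing of two boundaries is impossible. The genuine gap is exactly at the step you flagged: the shared-subarc case does \emph{not} follow from ``the same direction count.'' Whether two closed curves that share an arc cross along it is decided by whether their four free ends interleave on the boundary of a neighbourhood of the \emph{whole} shared arc, i.e., by the data at both endpoints jointly; the per-endpoint direction count does not exclude interleaving. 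Concretely, suppose $h_j\subsetneq h_i$, so the shared arc is $A=h_j$: $\partial B_i$ leaves the two endpoints of $A$ along the free $e$-branches, while $\partial B_j$ leaves along its gap, rightward at $\xi_j$ and leftward at $\xi_{j+1}$. If $h_j$ departed $\xi_j$ locally above $e'$ and returned to $\xi_{j+1}$ locally from below --- which is locally consistent, since $h_j$ may wrap around an endpoint of the arc $e'$ or cross $e'$ outside $g_j$ --- then the four free ends interleave and the two boundaries would cross along $A$. Nothing in your direction count forbids this picture; it is only excluded by a global argument.

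One way to close the gap: verify that in this situation \emph{all} intersections of $\partial B_i$ and $\partial B_j$ lie on the single shared arc $A$ (the gaps have disjoint relative interiors, $g_j$ meets $h_i$ only in the endpoints of $A$, and $g_i$ misses $h_j$), perturb one curve off $A$ inside a neighbourhood of $A$, and invoke the fact that two closed curves in the plane cross an even number of times; the crossing parity contributed by $A$ must therefore be even, so the interleaving configuration cannot occur. Analogous bookkeeping handles partial overlaps of $h_i$ and $h_j$ and the corner contact of adjacent bags at a common crossing (where your adjacent-pair argument already gives a touching, not a crossing). With such an argument your proof is complete, and indeed more careful than the paper's own one-liner, which even asserts that two boundaries meet in at most a single point although they can share whole subarcs of~$e$; but as written, the claim that the curves ``split off at the endpoints without passing through each other'' is unproved, and that claim is the heart of the observation.
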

\begin{proof}
  Every bag is bounded by a closed Jordan curve, and the boundaries of
  two distinct bags do not cross (at most they may touch at a single
  point that is one of $\xi_1,\xi_2,\ldots,\xi_N$).
\end{proof}

Observation~\ref{obs:crucial} implies that the containment order on the bags is a
downwards branching forest. The minimal elements in the containment
order are the lenses. Consider a lens~$L$ and the point~$p_i$
inside~$L$.  Since the vertex~$u$ of~$e$ is in the outer face
of~$D(e,e')$, the edge~$e_i$ has to leave each bag that contains~$L$.
Furthermore,  by definition~$e_i$ does not cross~$e$ and therefore it has to leave a bag~$B$ containing~$L$ through the
gap~$g$ of~$B$.
We now reformulate and prove the third claim (P3).

\begin{enumerate}[leftmargin=*,label={(P3')}]
    \item\label{prop:3p} For each pair~$\xi_i,\xi_{i+1}$ of consecutive crossings on~$e'$
    there is a lens~$L$ and a point~$p_j\in L$ such that~$e_j$
    crosses~$e'$ between~$\xi_i$ and~$\xi_{i+1}$.
\end{enumerate}

{\renewcommand\proofname{Proof sketch for \ref{prop:3p}}%
  \begin{proof}%[of \ref{prop:3p}]
    The pair~$\xi_i,\xi_{i+1}$ is associated with the bag~$B_i$. In the
    containment order of bags a minimal bag below~$B_i$ is a lens,
    let~$L$ be any of the minimal elements below~$B_i$. 
    By assumption, $L$ contains a point $p_j$. 
    Since~$L \subseteq B_i$, we have that also $p_j \in B_i$. 
    Thus, 
    %Since~$p_j \in L \subseteq B_i$ 
    it follows that~$e_j$ has a crossing with
    the gap~$g_i$, i.e.,~$e_j$ has a crossing with~$e'$ between~$\xi_i$
    and~$\xi_{i+1}$.
\end{proof}}

{\renewcommand\proofname{Proof sketch for \ref{prop:1}}%
\begin{proof}%[Proof  of \ref{prop:1}]
    We have to show that~$e_i$ and~$e'$ form no deadlock
    and no spiral. The minimality condition in the definition of~$e_i$
    implies that
    if~$L=B_{i_1}\subset B_{i_2}\subset \ldots \subset B_{i_t}$ is the
    maximal chain of bags with minimal element~$L$, then~$e_i$
    crosses the gaps of these bags in the given order and has no
    further crossings with~$e'$. If~$\gamma$ is a curve from~$L$ to~$u$ that
    avoids~$e$, then in the ordered sequence of gaps crossed by~$\gamma$
    we find a subsequence that is identical to the ordered sequence of
    gaps crossed by~$e_i$.  Since~$e$ and~$e'$ form no spiral, there is
    such a curve~$\gamma$ that forms no deadlock with~$e'$. Therefore,
    $e_i$ forms no deadlock with~$e'$, either.
    
    Now assume that~$e_i$ and~$e'$ form a spiral. Let~$B$ be the largest
    bag containing~$p_i$. Think of~$B$ as a drawing of~$e_i$ with a
    broad pen, which may also have some extra branches that have no
    correspondence in~$e_i$, see \figurename~\ref{fig:no-spiral}.  The
    formalization of this picture is that for every bag~$\beta$ formed
    by~$e_i$ with~$e'$ there is a bag~$B(\beta)$ formed by~$e$ and~$e'$
    with~$B(\beta) \subset \beta$. Now, if there is a lens~$\lambda$
    formed by~$e_i$ with~$e'$ such that every~$e_i$-avoiding\footnote{that is, disjoint from $e_i$ except for possibly a shared endpoint} curve
    to~$u$ is a deadlock with~$e'$, then there is a lens~$L(\lambda)$
    formed by~$e$ and~$e'$ with~$L(\lambda) \subset \lambda$ such that
    every~$e$-avoiding curve from~$L(\lambda)$ to~$u$ is also~$B$-avoiding and
    hence~$e_i$-avoiding.  Thus, %This shows that
    every such curve has a
    deadlock with~$e'$, whence~$e$ and~$e'$ form a spiral, contradiction.
    %contrary to our assumption.
\end{proof}}

{\renewcommand\proofname{Proof sketch for \ref{prop:2}}%
\begin{proof}%[Proof  of \ref{prop:2}]
  We %already
  know by \ref{prop:1} that~$e_i$ and~$e'$ form no deadlock. Therefore, by
    Lemma~\ref{lem:all-out}, the vertices of~$e_i$ and~$e'$ belong to
    the same region of~$D(e_i,e')$.  All crossings of~$e_i$ with~$e'$
    correspond to bags of~$e$ and~$e'$, therefore the vertices of~$e$
    and~$e'$ are in the outer face of~$D(e_i,e')$.  Together this shows
    that~$p_i$ is also in the outer face of~$D(e_i,e')$. Since every
    lens of~$D(e_i,e')$ contains a lens of~$D(e,e')$, it also contains
    one of the points hitting all lenses of~$D(e,e')$.  Hence, all
    lenses of~$D(e_i,e')$ are hit by the~$k-1$
    points~$p_1,\ldots,p_{i-1},p_{i+1},\ldots,p_k$.
\end{proof}}

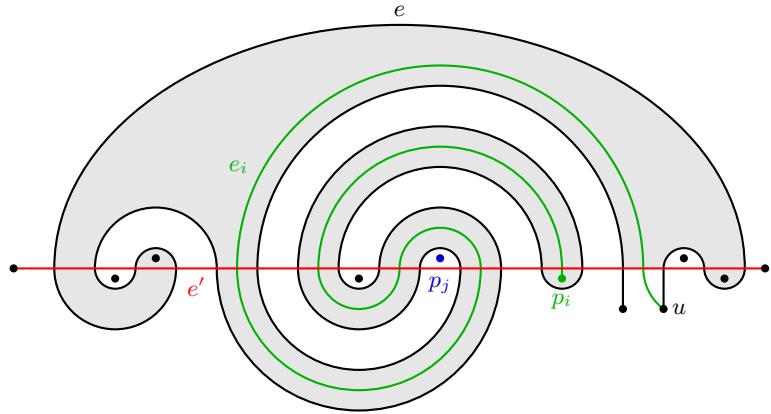
\begin{figure}[thb]
    \centering
    \begin{tikzpicture}[scale=0.27]
	\coordinate (S) at (0,0);
%meander
  \draw[fill=gray!20!white,thick] ($(S)+(1,0)$) arc (180:360:3);
  \draw[fill=gray!20!white,thick] ($(S)+(1,0)$) arc (180:0:17 and 12);
  \node at ($(S)+(18,12)$) [anchor=south]{$e$};
  \draw[fill=white,thick] ($(S)+(3,0)$) arc (180:360:1);
  \draw[fill=white,thick] ($(S)+(3,0)$) arc (180:0:3);
  \draw[fill=gray!20!white,thick] ($(S)+(5,0)$) arc (180:0:1);
  \draw[fill=gray!20!white,thick] ($(S)+(9,0)$) arc (180:360:7);
  \draw[fill=white,thick] ($(S)+(11,0)$) arc (180:360:5);
  \draw[fill=white,thick] ($(S)+(11,0)$) arc (180:0:9);
  \draw[fill=gray!20!white,thick] ($(S)+(13,0)$) arc (180:360:3);
  \draw[fill=gray!20!white,thick] ($(S)+(13,0)$) arc (180:0:7);
  \draw[fill=white,thick] ($(S)+(15,0)$) arc (180:360:1);
  \draw[fill=white,thick] ($(S)+(15,0)$) arc (180:0:5);
  \draw[fill=gray!20!white,thick] ($(S)+(17,0)$) arc (180:0:3);
  \draw[fill=white,thick] ($(S)+(19,0)$) arc (180:0:1);
  \draw[fill=gray!20!white,thick] ($(S)+(25,0)$) arc (180:360:1);
  \draw[fill=white,thick] ($(S)+(31,0)$) arc (180:0:1);
  \draw[fill=gray!20!white,thick] ($(S)+(33,0)$) arc (180:360:1);
  \draw[thick] ($(S)+(29,0)$) -- ($(S)+(29,-2)$);
  \draw[thick] ($(S)+(31,0)$) -- ($(S)+(31,-2)$);

%green exit
	\node[color=green!70!black] at ($(S)+(26,-0.8)$) [anchor=north]{$p_i$};
  \fill[color=green!70!black] ($(S)+(26,-0.5)$) circle (0.2);
  \draw[color=green!70!black,thick] ($(S)+(26,-0.5)$) -- ($(S)+(26,0)$);
  \draw[color=green!70!black,thick] ($(S)+(26,0)$) arc (0:180:6);
  \draw[color=green!70!black,thick] ($(S)+(14,0)$) arc (180:360:2);
  \draw[color=green!70!black,thick] ($(S)+(18,0)$) arc (180:0:2);
  \draw[color=green!70!black,thick] ($(S)+(22,0)$) arc (360:180:6);
  \draw[color=green!70!black,thick] ($(S)+(10,0)$) arc (180:0:10);
  \draw[color=green!70!black,thick] ($(S)+(30,0)$) arc (180:230:2.5);
  \node[color=green!70!black] at ($(S)+(11,5)$) [anchor=east]{$e_i$};

%the other lenses
  \fill ($(S)+(4,-0.5)$) circle (0.2);
  \fill ($(S)+(6,0.5)$) circle (0.2);
  \fill ($(S)+(16,-0.5)$) circle (0.2);
  \fill[color=blue!80!black] ($(S)+(20,0.5)$) circle (0.2);
  \node[color=blue!80!black] at ($(S)+(20,0)$) [anchor=north]{$p_j$};
  \fill ($(S)+(32,0.5)$) circle (0.2);
  \fill ($(S)+(34,-0.5)$) circle (0.2);

  \fill ($(S)+(29,-2)$) circle (0.2);
  \fill ($(S)+(31,-2)$) circle (0.2);
  \node at ($(S)+(31,-2)$) [anchor=west]{$u$};

  %straight line
  \draw[color=red,thick] ($(S)-(1,0)$) -- ($(S)+(36,0)$);
  \node[color=red] at ($(S)+(8,0)$) [anchor=north]{$e'$};
  \fill ($(S)-(1,0)$) circle (0.2);
  \fill ($(S)+(36,0)$) circle (0.2);

\end{tikzpicture}
    \caption{An edge~$e_i$(green) that forms a spiral with~$e'$.
        The bag~$B$ in gray and the lens~$L(\lambda)$ marked with the
        vertex~$p_j$(blue).} 
    \label{fig:no-spiral}
  \end{figure}

%%%%%%%%%%%%%%%%%%%%%%%%%%%%%%%%%%%%%%%%%%%%%%%%%%%%%%%%%%%%%%%%%%%%%%%%%%%%%%%%%%%%%%%%%
\section{Crossings in complete drawings}

Accounting for the four endpoints of the two crossing edges we have
$k\le n-4$ in Theorem~\ref{thm:maxcross}. Therefore, we obtain that
the number of crossings of a pair of edges in a star-simple drawing
of~$K_n$ without empty lens is upper bounded
by~$3(n-4)!$. This directly implies that the drawing of~$K_n$
has at most~$n!$ crossings. 
This is the first finite upper bound on the number of crossings in star-simple drawings of the complete graph $K_n$. 
 We know drawings of~$K_n$ in this drawing
mode that have an exponential number of crossings. 
Thus, it would be interesting
to reduce the huge gap between the upper and the lower bound.
Specifically, can a star-simple drawing of~$K_n$ have two edges with more than~$2^{n-4}+2^{n-6}$ crossings?

%
% ---- Bibliography ----
%
% BibTeX users should specify bibliography style 'splncs04'.
% References will then be sorted and formatted in the correct style.
%
\bibliographystyle{splncs04}
\bibliography{gd20-star-simple}
\end{document}